\newtheorem{theorem}{Theorem}
\newtheorem{proof}{Proof}
\def\BibTeX{{\rm B\kern-.05em{\sc i\kern-.025em b}\kern-.08em
    T\kern-.1667em\lower.7ex\hbox{E}\kern-.125emX}}
\begin{document}

\title{PolyLink: A Blockchain Based Decentralized Edge AI Platform for LLM Inference\\
}
\IEEEoverridecommandlockouts

\author{\IEEEauthorblockN{Hongbo Liu\textsuperscript{1}, Jiannong Cao\textsuperscript{1}, Bo Yang\textsuperscript{2}, Dongbin Bai\textsuperscript{1}, Yinfeng Cao*\thanks{*Corresponding Author: Yinfeng Cao and Mingjin Zhang}\textsuperscript{1}, Xiaoming Shen\textsuperscript{1}, Yinan Zhang\textsuperscript{1}\\Jinwen Liang\textsuperscript{1}, Shan Jiang\textsuperscript{3}, Mingjin Zhang*\textsuperscript{1}}
\IEEEauthorblockA{
\textsuperscript{1}\textit{Department of Computing, The Hong Kong Polytechnic University, Hong Kong, China}\\
\textsuperscript{2}\textit{China Mobile (Hong Kong) Innovation Research Institute, Hong Kong, China}\\
\textsuperscript{3}\textit{School of Software Engineering, Sun Yat-sen University, China}\\
hong-bo.liu@connect.polyu.hk, csyfcao@comp.polyu.edu.hk
}
}



\maketitle

\begin{abstract}
The rapid advancement of large language models (LLMs) in recent years has revolutionized the AI landscape. However, the deployment model and usage of LLM services remain highly centralized, creating significant trust issues and costs for end users and developers. To address these issues, we propose PolyLink, a blockchain-based decentralized AI platform that decentralizes LLM development and inference. Specifically, PolyLink introduces a decentralized crowdsourcing architecture that supports single-device and cross-device model deployment and inference across heterogeneous devices at the edge. Moreover, to ensure the inference integrity, we design the TIQE protocol, which combines a lightweight cross-encoder model and an LLM-as-a-Judge for a high-accuracy inference evaluation. Lastly, we integrate a comprehensive token-based incentive model with dynamic pricing and reward mechanisms for all participants.
We have deployed PolyLink and conducted an extensive real-world evaluation through geo-distributed deployment across heterogeneous devices. Results indicate that the inference and verification latency is practical. 
Our security analysis demonstrates that the system is resistant to model degradation attacks and validator corruptions.
PolyLink is now available at \url{https://github.com/IMCL-PolyLink/PolyLink}.

\end{abstract}

\begin{IEEEkeywords}
Blockchain, Web3, Edge Computing, DePIN, LLM.
\end{IEEEkeywords}

\section{Introduction}

\noindent Artificial Intelligence (AI) has experienced massive growth and adoption across various domains.
In particular, cloud-based Large Language Models (LLMs) such as ChatGPT, Claude, and Gemini have emerged as groundbreaking AI services in recent years, demonstrating remarkable capabilities in understanding, generation, and reasoning in general tasks, thus enabling a wide range of AI applications.


However, current AI services and applications are highly centralized, with few stakeholders (e.g. cloud service providers) controlling the majority of infrastructure, models, and access. This centralization arises from several practical factors: both training and inference of large models require substantial computational and energy resources, which are typically concentrated in data centers. Normal end users, small/middle enterprises and organizations often cannot afford such hardware and software costs.
Consequently, the centralized nature of current AI creates significant barriers to access and improve AI, particularly for individuals and organizations in developing regions or with limited financial resources.

AI democratization is a recent new trend aiming at addressing the AI centralization issues by redistributing AI from centralized stakeholders to decentralized parties \cite{10.1145/3696410.3714666}. Specifically, the democratization of AI contains three perspectives: 1) usage: making the price of AI services and applications more accessible and affordable to end users; 2) development: promoting cost-effective hardware infrastructure such as GPUs and open-source AI development framework for developers to freely develop and deploy their customized AI models (e.g., federated learning \cite{cao2021toward}); 3) governance: enabling distributed ownership and transparency management on the AI models and hardware infrastructure. 


Decentralized Physical Infrastructure Networks (DePIN) are regarded as a promising solution to realize AI democratization \cite{lin2024decentralized}. In DePIN protocols, participants contribute their idle computational resources (e.g. GPUs and CPUs) to a shared network, which are further clustered for support AI model training and inference. Blockchain-based incentive such as tokens are used to reward the contributors to govern the network that incentivize the high-quality AI services with integrity. However, existing DePIN protocols suffer from three challenges. 
First, the limited computational resource provided by low-end devices makes it difficult for DePINs to support large-scale LLMs.
Second, the verification mechanisms for the integrity of computational results are either insecure or inefficient, which is vulnerable to malicious device owners that try to perform dishonest computations for profit. 
Third, the existing DePIN incentive mechanisms is typically less effective as they only apply the similar traditional cloud service price model (device provider and users), overlooking the rewards for developers who contribute AI models.

To address these challenges, we propose \textsc{PolyLink}, a blockchain-based decentralized AI platform running over edge networks. We develop an inference framework that supports both single-device and cross-device execution, enabling flexible deployment across heterogeneous devices with varying model sizes and computational demands.
Then we design a Trustless Inference Quality Evaluation (TIQE) protocol that ensures the inference integrity without centralized authority. 
At last, we develop a comprehensive incentive mechanism that fairly rewards device contributors and model providers based on their contributions and result quality, while charging fees to service users.
We conduct extensive real-world deployment and evaluation across 20 geo-distributed devices contributed by multiple universities. Results show that the system delivers robust inference with acceptable latency across heterogeneous edge environments. The TIQE protocol achieves a favorable balance between low latency and high accuracy. Our security analysis further demonstrates resilience against model degradation and validator corruption attacks.
The main contributions of this paper are as follows:

\begin{itemize}
\item We design \textsc{PolyLink}, the first blockchain-based decentralized AI platform with full design details that supports both single-device and cross-device LLM inference over heterogeneous edge networks.
\item We propose the TIQE protocol, combining lightweight Cross-Encoder model with LLM-as-a-Judge to ensure inference integrity with low cost.
\item We develop a dynamic incentive model with rewards and pricing mechanisms to align the interests of model providers, workers, and validators.
\item We evaluate \textsc{PolyLink} through real-world large-scale geo-distributed deployment on diverse devices and provide security analysis showing resilience to model degradation and validator corruption.
\end{itemize}

\begin{figure*}[htb]
    \centering
    \includegraphics[width=0.75\linewidth]{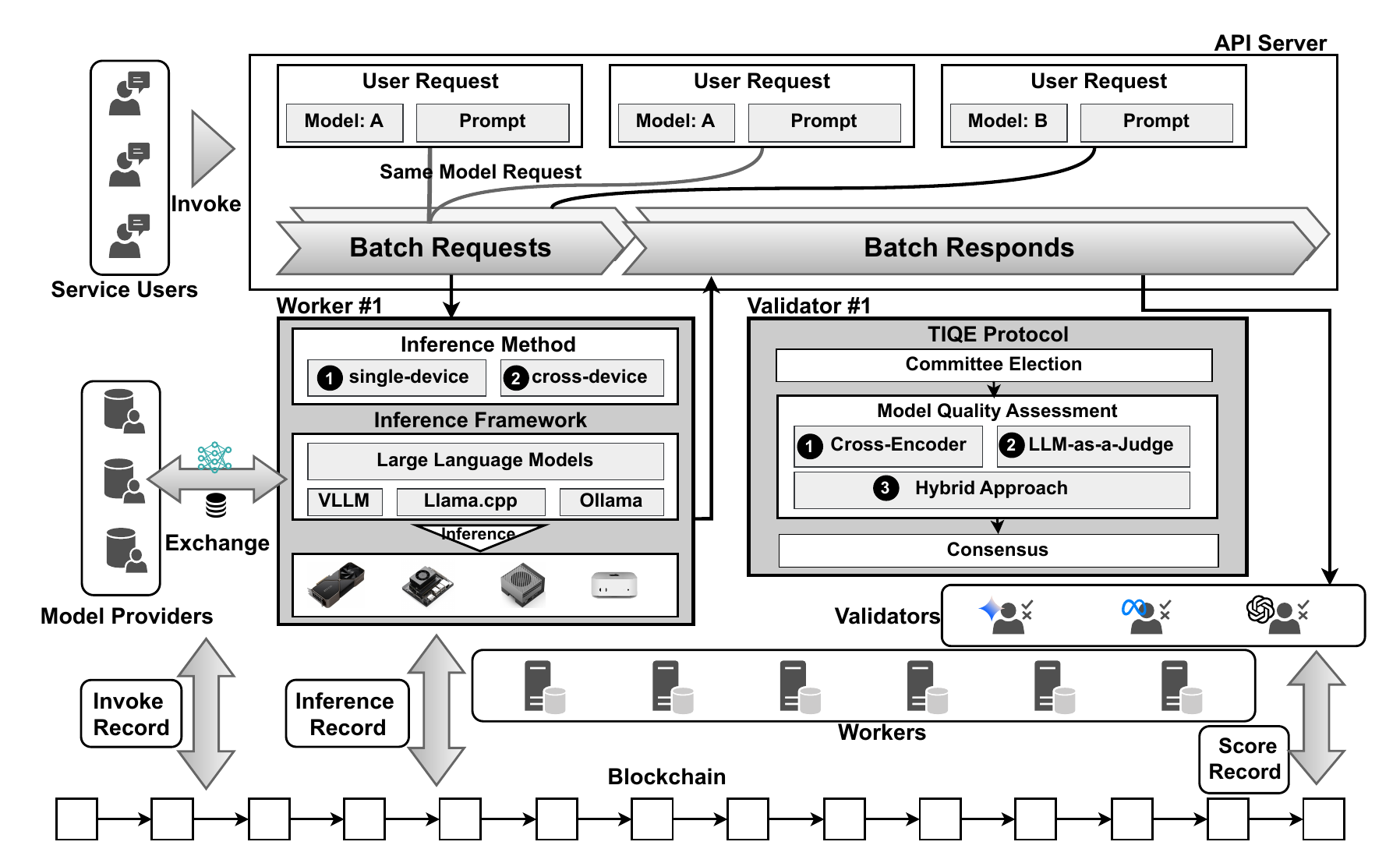}
    \caption{Overview of \textsc{PolyLink}, a blockchain-based decentralized edge AI platform. Service users send inference requests via the API. The API server batches requests targeting the same model, which are then processed by a worker running that model. The responses are returned and evaluated for inference quality by the validators.}
    \label{fig:overview}
\end{figure*}

\section{Related Work}
\noindent We conduct a comprehensive review of existing trustless inference protocols and decentralized AI platforms, summarizing their key contributions and analyzing their limitations (Tab. \ref{tab:related}).

\noindent \textbf{Trustless Inference Protocol.} Trustless inference protocols aim to guarantee the integrity of LLM inference performed on untrusted devices. zkLLM \cite{sun2024zkllm} successfully provides cryptographically verifiable inference by generating a zero-knowledge proof of each result with fully decentralization. However, the method incurs substantial overhead. For LLaMa-2-13B, it requires \SI{986}{\second} to generate the model commitment and \SI{803}{\second} to generate a proof for every inference. SVIP \cite{sun2024svip} adopts an activation-based approach that trains a classifier on final-layer activations and inputs, yet it depends on a trusted third party (TTP) for training and lacks generalizability. TOPLOC \cite{ong2025toploc} and SPEX \cite{dallachiesa2025statistical} use locality-sensitive hashing (LSH) to hash activations and verify inference integrity, but they require re-inference by TTP and cannot efficiently distinguish high-precision model executions.

\noindent \textbf{Decentralized AI platform.} Recently, numerous projects have pursued decentralized AI within the Web3 ecosystem. AIArena \cite{wang2025aiarena} targets decentralized training: validators use public datasets to evaluate training quality. io.net \cite{io_net} provides decentralized computing resources and relies on device-side status monitors and work logs to ensure that workers execute tasks honestly. SMART \cite{huang2024advancing} offers a hybrid on-chain/off-chain inference framework, but its dependence on trusted execution environments (TEEs) constrains the platform’s scalability and efficiency.

\begin{table}[t]
  \centering
  \caption{Summary of Existing Protocols and Platforms}
  \label{tab:related}
  \newcommand*\feature[1]{\ifcase#1 \Circle\or\LEFTcircle\or\CIRCLE\fi}
  \newcommand*\f[3]{\feature#1 & \feature#2 & \feature#3}
  
  \begin{threeparttable}
  \resizebox{\linewidth}{!}{%
  \begin{tabular}{lccccc}
    \toprule
    Category & Work  & Service & Decentralization & Integrity & Efficiency\\
    \midrule
    \multirow{4}{*}{Protocol} 
      & zkLLM~\cite{sun2024zkllm}   & Inference  & \feature{2} & Cryptography-based & \feature{0} \\
      & SVIP~\cite{sun2024svip}     & Inference & \feature{1} & Learning‐based & \feature{1} \\
      & TOPLOC~\cite{ong2025toploc} &  Inference & \feature{1} & LSH‐based & \feature{1} \\
      & SPEX~\cite{dallachiesa2025statistical}  & Computation & \feature{1} & LSH‐based & \feature{1}  \\
    \midrule
    \multirow{4}{*}{Platform}
      & AIArena~\cite{wang2025aiarena}  & Training & \feature{2} & Evaluation-based & \feature{1} \\
      & io.net~\cite{io_net}    & Computation  & \feature{1} & Log-based & \feature{2} \\
      & SMART~\cite{huang2024advancing} & Inference & \feature{2} & TEE-based & \feature{1} \\
      & \textsc{PolyLink}   & Inference & \feature{2} & Evaluation-based & \feature{1} \\
    \bottomrule
  \end{tabular}
  }
    \begin{tablenotes}\footnotesize
    \item $\dagger$ $\feature2$ = provides property; $\feature1$ = partially provides; $\feature0$ = does not provide. 
  \end{tablenotes}
  \end{threeparttable}
\end{table}

\section{PolyLink Overview}

\subsection{System Model}
\noindent Fig. \ref{fig:overview} shows the overview of \textsc{PolyLink} system. There are several kinds of participants in the \textsc{PolyLink}:

\begin{itemize}
    \item \textbf{Service User:} Service users send inference requests to the system and pay with cryptocurrency tokens for computation.
    \item \textbf{Worker:} Participants with idle GPU resources, involving NVIDIA GPU, NVIDIA Jetson, Apple Silicon etc., contribute those resources and earn rewards. Workers are divided into two categories: \emph{worker with single device} and \emph{worker with multiple devices}. 
    \item \textbf{Model Provider:} Model providers are responsible for deploying their own LLMs, such as Fine-tuning Models, on the workers and earn rewards.
    \item \textbf{Validator:} Validators are responsible for evaluating the inference quality responded by model on worker. To participate, validators must stake significant tokens.
\end{itemize}

\subsection{Threat Model}
\noindent \textsc{PolyLink} is a decentralized platform, most system participants are untrusted, which may introduce the following threats.

\noindent \textbf{Model Degradation Attack.} A malicious worker may use low-precision models or perform lazy computation for profit, leading to degraded inference quality.

\noindent \textbf{Validator Corruption Attack.} A malicious validator may intentionally submit manipulated scores (either excessively high or low) to disrupt the consensus process, distort quality evaluation, and undermine the fairness of reward distribution.
    


We assume that models from providers are benign, the underlying blockchain operates correctly and remains available, and fewer than $1/3$ of validators are malicious, following the standard Byzantine Fault Tolerance assumption.

\subsection{Design Objectives}
\noindent To tackle the challenges in DePINs, the design objectives of \textsc{PolyLink} are as follows.

    \noindent \textbf{Decentralization and Trustless.}  The system coordinates computation and validation among participants without relying on TTP. Both the validator committee and model deployment follow a decentralized, edge-computing paradigm, distinguishing the system from traditional cloud platforms such as OpenAI or Google. Moreover, we prioritize the devices that are actually geo-distributed.

    \noindent \textbf{Inference Integrity.}  Inference requests are executed honestly on decentralized workers, and their results are evaluated by validators through a protocol to guarantee integrity.

    \noindent \textbf{Incentive and Protocol Efficiency.}  
    The system is designed to ensure fair and sustainable incentives for each entity. Moreover, the protocol employed by the system for integrity does not introduce substantial overhead.

\section{PolyLink Specification}

\subsection{Decentralized Model Inference}
\noindent Our platform enables decentralized model inference by routing all inference requests to workers. We provide two categories of inference on workers: \emph{single-device} and \emph{cross-device} inference.

\noindent \textbf{Single-device Inference.} Model $M$ is deployed on a single device. Typically, this approach is used for models with relatively few parameters. We define the inference process as a function 
\begin{equation}
M(t^p) \rightarrow t^r
\end{equation}
where a service user inputs a prompt $t^p$ of length $p$ into a model $M$. The model outputs an answer $t^r$ of length $r$.

\noindent \textbf{Cross-device Inference.} For workers with multiple devices aim to deploy models with large parameters, we adopt the \emph{EdgeShard} scheme \cite{zhang2024edgeshard}, which partitions the model $M$ into $n$ sequential shards executed on an independent device:
\begin{equation}
M = \prod_{i=1}^nM_{i}
\end{equation}

Due to the autoregressive nature of LLMs, inference is performed iteratively for $r$ rounds across shards. Let $x_{0;0} := t^p$ denote the initial input. In the round $k(k \geq 1)$, for each shard $M_i$, the intermediate output is computed as:
\begin{equation}
x_{i;k} \leftarrow \left\{
\begin{array}{rcl}
M_i(x_{n;k-1}) & & {i = 1}\\
M_i(x_{i-1;k}) & & {i > 1}\\
\end{array} \right.
\end{equation}

After $r$ rounds, the final response is:
\begin{equation}
t^r = \sum_{k=1}^rx_{n;k}
\end{equation}

\subsection{Pricing and Incentive Mechanism}
\subsubsection{Pricing Mechanism}

For each inference request $Req_i$, the cost $C_{\text{inference}}^i$ is determined by the input/output token lengths $p,r$ and the computational complexity of the model. The pricing formula is defined as: $ C_{\text{inference}}^i = \delta \cdot S_M \times (C_{\text{in}}^i \times p + C_{\text{out}}^i \times r) $ where $S_M$ denotes the normalized model scale factor, $\delta$ is a scaling coefficient that reflects market-based adjustments. $C_{\text{in}}^i$ and $C_{\text{out}}^i$ are the base cost per input token and output token.

\subsubsection{Incentive Mechanism}
In each inference batch for a selected model run on one worker, we assume a base reward per inference task is denoted as $R_i=\theta \cdot C_{inference}^i$, $\theta$ is the reward parameter. A batch contains $b$ inference tasks, the total reward allocated to that batch is $ R_{\text{batch}} = \sum_{i=1}^b R_i $. 
The reward is divided into two parts: a fixed portion $\beta \cdot R_{\text{batch}}$ allocated to validators, and a dynamic portion $(1 - \beta) \cdot R_{\text{batch}}$ allocated based on the model quality score $\alpha \in [0,1]$ (Algorithm. \ref{alg:rdib}).


\noindent \textbf{Worker Reward.} The worker receives a share of the dynamic reward based on its model score:

\begin{equation}
R_{\text{worker}} = \alpha (1 - \beta) \cdot R_{\text{batch}}
\end{equation}

\noindent \textbf{Validator Reward.} Let there be $m$ validators in the batch. Each validator $v_i$ stakes an amount of token $ST_{v_i}$. Define the vector of staked tokens as:

\[
\bm{ST} = \{ ST_{v_1}, ST_{v_2}, \ldots, ST_{v_m} \}
\]

The total validator reward includes both the fixed base and the remainder of the dynamic portion:

\begin{equation}
R_{\text{validators}} = \left[ \beta + (1 - \alpha)(1 - \beta) \right] \cdot R_{\text{batch}}
\end{equation}

Each validator $v_i$ receives a share of the validator reward proportional to their staked amount:

\begin{equation}
R_{v_i} = \frac{ST_{v_i}}{\sum_{j=1}^{m} ST_{v_j}} \cdot R_{\text{validators}}
\end{equation}


\begin{algorithm}[htbp]
\caption{Reward Distribution for Inference Batch}\label{alg:rdib}
\KwIn{
  $\bm{R}$: reward per inference task vector\\
  $b$: number of inference tasks in batch \\
  $\alpha$: model quality score \\
  $\beta$: validator base reward factor \\
  $\bm{ST}$: validator stakes \\
    }
\KwOut{
  $R_{\text{worker}}$: reward for worker \\
  $\{R_{v_1}, \ldots, R_{v_m}\}$: rewards for validators
}
\SetKwProg{Fn}{Procedure}{:}{End}
\Fn{RewardDistributed $(\bm{R}, b,\alpha,\beta,\bm{ST})$}{
$R_{\text{batch}} := 0$ 

\ForEach{$R_i \in \bm{R}$}{
  $R_{\text{batch}} := R_{\text{batch}} + R_i$
}

$R_{\text{worker}} := \alpha \cdot (1 - \beta) \cdot R_{\text{batch}}$

$R_{\text{validators}} := [\beta + (1 - \alpha) \cdot (1 - \beta)] \cdot R_{\text{batch}}$ 

\For{$j \leftarrow 1$ \KwTo $m$}{
  $ST_{total} := ST_{total}+ST_{v_j}$ 
}
\For{$i \leftarrow 1$ \KwTo $m$}{
  $R_{v_i} := \frac{ST_{v_i}}{ST_{total}} \cdot R_{\text{validators}}$ 
}
}
\Return $R_{\text{worker}}, \{R_{v_1}, \ldots, R_{v_m}\}$ 
\end{algorithm}

\noindent \textbf{Model Provider Reward.} The model provider receives rewards through transactions with the worker. Specifically, when a worker selects and utilizes a model provided by a model provider, it must pay a \emph{model usage fee (MUF)} defined by the model provider. This fee can be fixed or dynamically priced based on model quality. The worker can also act as the model provider. 

\subsection{Trustless Inference Quality Evaluation Protocol}

\noindent To ensure inference integrity and quality, we propose a \emph{Trustless Inference Quality Evaluation (TIQE)} protocol (Fig. \ref{fig:epoch}), which enables a decentralized validator committee to perform quality evaluations and reach consensus on the results. Based on the consensus, a \emph{model quality score} is assigned to the model run on the worker.

\begin{figure}[htbp]
    \centering
    \includegraphics[width=0.95\linewidth]{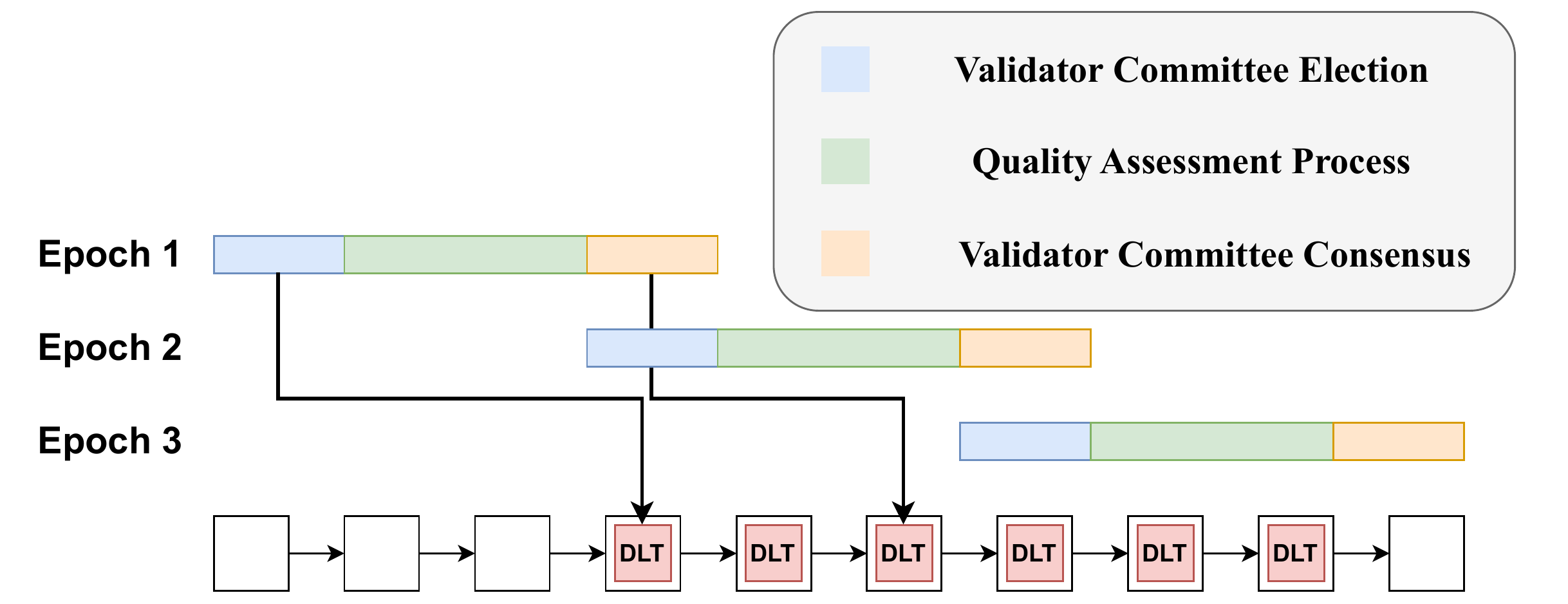}
    \caption{Overview of the \emph{TIQE} Protocol in each Epoch}
    \label{fig:epoch}
\end{figure}

\subsubsection{Quality Assessment Process}
In this process, validators evaluate inference results using a trustless scoring mechanism. The evaluation function is formally defined as:

\begin{equation}
\text{eval}(t^p, t^r, \mathcal{J})  \rightarrow \mathcal{J}(t^p,t^r) \rightarrow \textit{Score}
\end{equation}

where $\mathcal{J}$ indicates the judgment approach applied to assess the result. We propose three types of judgment approaches to support efficient, cost-effective evaluation of inference quality.

\noindent \textbf{Cross-encoder Approach.} 
Cross-encoder is a light-weight Transformer-based model (e.g. BERT, LLM with few parameters) trained to measure the similarity between related query-document pairs \cite{reimers2019sentence,zhang2024proof}. We construct an input pair by concatenating the prompt and the corresponding output result, which is then fed into a cross-encoder model to produce a quality score. This cross-encoder provides a lightweight evaluation solution that estimates the semantic similarity between the prompt and the result. However, it is limited in open generative scenario since it's limitation model capability. 

\noindent \textbf{LLM-as-a-Judge Approach.}
With the significant generalization and reasoning capabilities, LLM is widely used to evaluate the performance of other LLMs \cite{gu2024survey}. Typically, the judge LLM must be a large-scale model with strong reasoning capabilities, such as ChatGPT-o3 (175B), Gemini 2.5 Pro, DeepSeek-V2 (671B), or LLaMA 3.2 (405B). Figure~\ref{fig:prompt} illustrates an example of the evaluation prompt used by the judge LLM to assess the output quality and produce a corresponding model score.

\begin{figure}[htbp]
    \centering
    \begin{adjustbox}{width=0.85\linewidth}  
    \begin{tcolorbox}
You are an expert judge. Your task is to rate the quality of the following LLM inference result given the provided input.  
Rate on a scale from 1 to 5, where:

1 = Completely incorrect or nonsensical\\
2 = Mostly incorrect or with major flaws\\
3 = Partially correct but with noticeable issues\\
4 = Mostly correct with minor issues\\
5 = Completely correct, comprehensive, and well-reasoned

\textit{Input:}\\
\{input\}

\textit{LLM Inference Output:}\\
\{output\}

Please return only the numeric score (1 to 5) and no explanation.

\textit{Score:}
    \end{tcolorbox}
    \end{adjustbox}
    \caption{The prompt of \emph{LLM-as-a-Judge} to evaluate the decentralized LLM inference quality.}
    \label{fig:prompt}
\end{figure}

The \textit{LLM-as-a-Judge} approach can be implemented either by invoking an commercial API or by deploying the judge model on validator nodes. However, invoking commercial APIs, such as OpenAI's GPT-o3 (which charges up to \$40.0 per million tokens for output \footnote{\url{http://openai.com/api/pricing/}}), introduces significant costs. Alternatively, deploying judge models locally imposes substantial hardware requirements on validators, including high memory capacity and GPU resources.

\noindent \textbf{Hybrid Approach.}  
To mitigate the limitations of both the \emph{cross-encoder} and \emph{LLM-as-a-Judge} methods, we propose a hybrid evaluation approach. Within each epoch, when validators collect a batch of inference tasks associated with a specific model on a worker, they evaluate this batch using a cross-encoder. In epoch $e$, the cross-encoder score $Score_{cross}^e$ is defined as:

\begin{equation}
\text{Score}_{\text{cross}}^e = \frac{1}{\lvert \mathcal{B} \rvert} \times \sum_{B \in \mathcal{B}} \sum_{i \in B} eval(t^p_i,t^r_i,\{ \textit{Cross-Encoder}\})
\end{equation}

At a randomly selected point within the epoch, an evaluation score using the \emph{LLM-as-a-Judge} approach, which is assigned a higher weight in the final scoring computation. In epoch $e$, the LLM score $Score_{llm}$ is defined as:

\begin{equation}
\text{Score}_{\text{llm}}^e =  eval(t^p_e,t^r_e,\{ \textit{LLM}\})
\end{equation}

This hybrid approach allows the system to maintain evaluation accuracy while significantly reducing computational costs, thereby providing a more cost-effective and scalable solution for trustless model quality evaluation. After a specific epoch $e$, the final model quality score is defined as: 

\begin{equation}
\text{Score}^e_{\text{final}} = \lambda \cdot \text{Score}^e_{\text{llm}} + (1 - \lambda) \cdot \text{Score}^e_{\text{cross}}
\label{eq:11}
\end{equation}

\subsubsection{Model Quality Score Consensus}

In the \emph{TIQE} protocol, validators are elected at the beginning of each epoch, and they are responsible for reaching consensus on the model quality score. In this section, we introduce the design of these two key processes: validator election and consensus formation.

\noindent \textbf{Validator Committee Election.} At the beginning of each epoch, a validator committee is elected using a Verifiable Random Function (VRF)-based selection mechanism. When a validator $V_k \in \mathcal{V}$ observes the start of a new epoch $e$, it performs a VRF using its secret key $sk_k$ and a random seed which is composed by previous block hash $H_{\text{prev}}$ and epoch hash $ \mathcal{H}(e)$.

\begin{equation}
(r_k, \pi_k) = \text{VRF}_{sk_k}(H_{\text{prev}} \parallel \mathcal{H}(e))
\end{equation}

where $r_k$ is a verifiable pseudorandom and $\pi_k$ is a proof that $r_k$ was correctly computed with $sk_k$. Our system defines a public election rule $\mathcal{R}(r_k)$:

\begin{equation}
\mathcal{R}(r_k): \quad r_k \bmod \lvert \mathcal{V} \rvert < \lvert M_{\text{committee}} \rvert
\end{equation}

where $\lvert \mathcal{V} \rvert$ is the total number of eligible validators, $M_{\text{committee}}$ is the desired committee and $\lvert M_{\text{committee}} \rvert$ is the committee size. If the $r_k$ satisfies the rule, the validator becomes a member of the validator committee $M_{committee}^{(e)}$ for epoch $e$. The proof $\pi_k$ can be used by any other node to verify that $V_k$ was elected fairly, without revealing its secret key $sk_k$. This mechanism ensures decentralized, unpredictable, and verifiable election of validators in each epoch.

\noindent \textbf{Validator Committee Consensus.} In the consensus phase of each epoch, all validators in the elected committee submit their individual model quality scores to the smart contract $SC$. After $SC$ collectes sufficient scores $\bm{Score}$, $SC$ computes the median score $Score_{\text{med}}$ from $\bm{Score}$.

To discourage dishonest or biased evaluations, we define a deviation threshold $Th$. If a validator's score $Score_i \in \bm{Score}$ deviates from the median score by more than $Th$, i.e.,

\begin{equation}
|Score_i - Score_{\text{med}}| > Th
\label{eq:14}
\end{equation}

the validator is penalized by slashing a portion of staked tokens. The smart contract shown as Algorithm \ref{alg:vcc}. 

\begin{algorithm}[htbp]
\caption{Scoring Consensus Smart Contract}\label{alg:vcc}
\KwIn{
  $\bm{Score}$ : scores submitted by validators\\
  $\bm{ST}$: staked tokens of validators\\
  $Th$: deviation threshold
}
\KwOut{
  $Score_{\text{med}}$: final consensus model score\\
  $\bm{Slashed}$: slashing vector for penalized validators
}
\SetKwProg{Fn}{Procedure}{:}{End}
\Fn{ConsensusOnScore $(\bm{Score}, \bm{ST}, Th)$}{
  Sort $\bm{Score}$ in ascending order 
  $Score_{\text{med}} := \text{Median}(\bm{Score})$ 
  $\bm{Slashed} := \{0, \ldots, 0\}$ 
  \ForEach{$i \leftarrow 1$ \KwTo $m$}{
    \If{$|Score_i - Score_{\text{med}}| > Th$}{
       $\bm{Slashed}[i] \leftarrow \gamma \cdot ST_{v_i}$ \\
       \tcp{$\gamma$ is the slashing rate}
    }
  }
}
\Return $Score_{\text{med}}, \bm{Slashed}$ 
\end{algorithm}

\section{Security Analysis}
\noindent In this section, we analyze the security of \textsc{PolyLink} under the threat model. 



\begin{theorem}[Model Degradation Attack Resistance]
If a worker continuously performs low-precision inference, the reward obtained in the designated epoch $e$ will be 0.
\end{theorem}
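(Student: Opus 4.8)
The plan is to trace the dependency of the worker's payout on the quality score and show that continuous low-precision inference forces that score---and hence the reward---to zero. First I would recall the worker reward formula $R_{\text{worker}} = \alpha(1-\beta)\cdot R_{\text{batch}}$, where $\alpha \in [0,1]$ is the model quality score fixed by the \emph{TIQE} consensus for epoch $e$. Since the validator base factor satisfies $\beta < 1$ and the batch reward $R_{\text{batch}} = \sum_{i=1}^b R_i$ is strictly positive whenever the batch is non-empty, the product vanishes if and only if $\alpha = 0$. Thus the entire claim reduces to proving that the consensus quality score assigned to a worker performing continuous low-precision inference is $\alpha = 0$.

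Next I would argue that both ingredients of $\text{Score}^e_{\text{final}} = \lambda\,\text{Score}^e_{\text{llm}} + (1-\lambda)\,\text{Score}^e_{\text{cross}}$ collapse to their minimum under continuous degradation. For the cross-encoder term, $\text{Score}^e_{\text{cross}}$ is the average over every batch in the epoch; because the worker degrades every inference, each per-task evaluation returns the minimum similarity, so the average attains the normalized floor $0$. For the \emph{LLM-as-a-Judge} term, the key observation is that the judge is invoked at a randomly selected point in the epoch: the hypothesis that degradation is continuous means that whichever task is sampled is itself degraded, so the judge rates it as ``completely incorrect or nonsensical'', which after normalization of the $1$-to-$5$ scale maps to $0$. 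Hence both $\text{Score}^e_{\text{cross}}$ and $\text{Score}^e_{\text{llm}}$ equal $0$ and, by Eq.~\eqref{eq:11}, $\alpha = \text{Score}^e_{\text{final}} = 0$.

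The remaining step is to ensure this score survives the committee aggregation. Here I would invoke the consensus mechanism: the reported $\alpha$ is the median of the committee's submitted scores, and under the standard Byzantine assumption fewer than $1/3$ of validators are malicious, so a strict majority submit the honest (minimal) score. The median therefore coincides with the honest value $0$, and any adversarial validator attempting to inflate it beyond the deviation threshold $Th$ is slashed by Algorithm~\ref{alg:vcc} without shifting the outcome. Combining the three steps gives $\alpha = 0$ and therefore $R_{\text{worker}} = \alpha(1-\beta)\cdot R_{\text{batch}} = 0$.

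I expect the main obstacle to be the normalization argument that pins the worst-case evaluation to \emph{exactly} $0$ rather than merely to a small value, since the statement asserts an exact zero reward; the proof must either fix the score normalization so that the lowest rating maps to $0$, or interpret ``low-precision'' as producing output that the evaluators deterministically classify at the minimum. A secondary subtlety is justifying that the single random \emph{LLM-as-a-Judge} probe is representative: this relies entirely on the ``continuously'' hypothesis, and I would make explicit that without it a worker could degrade selectively and escape the probe, so the word ``continuously'' is doing essential work in the statement.
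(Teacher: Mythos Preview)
Your proposal is correct and follows essentially the same approach as the paper: recall $R_{\text{worker}} = \alpha(1-\beta)R_{\text{batch}}$, argue that continuous degradation drives both terms of $\text{Score}^e_{\text{final}} = \lambda\,\text{Score}^e_{\text{llm}} + (1-\lambda)\,\text{Score}^e_{\text{cross}}$ to their minimum, and conclude $\alpha = 0$. Your version is in fact more thorough than the paper's---you add the median-consensus step and flag the normalization issue, whereas the paper's proof simply asserts $\alpha_e \approx 0$ and writes $R_{\text{worker}} \sim 0$ without addressing either point.
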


\begin{proof}
The worker's reward in epoch $e$ is $R_\text{worker} = \alpha^e (1 - \beta) \cdot R_\text{batch}$, where $\alpha^e \in [0, 1]$. Assume the worker performs continuous low-precision inference in previous $e-1$ epochs. Then, according to the evaluation function (Eq.~\ref{eq:11}), the final quality score satisfies $\alpha_e = \lambda \cdot Score_\text{LLM}^e + (1 - \lambda) \cdot Score_\text{Cross}^e \approx 0$. Thus,$R_\text{worker} \sim 0 \cdot (1 - \beta) \cdot R_\text{batch} = 0$. Therefore, a worker conducting consistently low-quality inference will receive zero reward.
\end{proof}

\begin{theorem}[Validator Corruption Attacks Resistance]
If a validator continuously performs dishonestly ($< 1/3$), the staked token in the designated epoch $e$ will be slashed to 0.
\end{theorem}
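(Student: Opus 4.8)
The plan is to parallel the preceding theorem but to replace its reward argument with a slashing argument grounded in the Byzantine Fault Tolerance assumption and the median rule of Algorithm~\ref{alg:vcc}. First I would invoke the threat model, under which fewer than $1/3$ of the $m$ committee validators are malicious; hence strictly more than $2/3$ of the scores in $\bm{Score}$ are submitted by honest validators. Since all honest validators apply the same trustless evaluation (Eq.~\ref{eq:11}) to the same model on the same worker, I would assume their scores lie within a common window whose width is at most $Th$, so that any two honest reports cannot themselves trigger the deviation condition of Eq.~\ref{eq:14}.

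Next I would show that the consensus value $Score_{\text{med}}$ is pinned to this honest window. Because honest scores form a strict majority of the $m$ submissions, the middle order statistic computed by Algorithm~\ref{alg:vcc} must fall inside the honest cluster, regardless of whether the malicious minority inflates or deflates its reports: a minority of extreme values, high or low, cannot displace the median out of the dense honest region. A validator that persistently behaves dishonestly, as in the Validator Corruption Attack, therefore places its manipulated score outside this window, so $|Score_i - Score_{\text{med}}| > Th$ and the slashing branch of Algorithm~\ref{alg:vcc} removes $\gamma \cdot ST_{v_i}$ of its stake in that epoch.

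Finally I would accumulate this penalty over the $e-1$ epochs of continuous misbehaviour. Removing a fixed fraction $\gamma$ of the remaining stake each epoch leaves a residual of $(1-\gamma)^{e-1} ST_{v_i}$, which is driven to $0$ by epoch $e$; equivalently, once the stake falls below the minimum staking requirement the validator is ejected and forfeits what remains, yielding the same conclusion.

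The step I expect to be the main obstacle is pinning $Score_{\text{med}}$ to the honest window, since this rests on the honest-score dispersion being smaller than $Th$. That bound depends on the variance of the cross-encoder and \emph{LLM-as-a-Judge} evaluations; if honest scores can themselves straddle the threshold, an honest validator risks spurious slashing, so making the relationship between $Th$ and the honest-score spread precise is what the argument ultimately hinges on.
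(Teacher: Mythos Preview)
Your proposal is correct and follows essentially the same line as the paper: both invoke the $<1/3$ Byzantine assumption to argue that the honest majority pins $Score_{\text{med}}$, so a persistently dishonest validator triggers the slashing branch of Algorithm~\ref{alg:vcc} each epoch, and both then compound the geometric factor $(1-\gamma)^k$ to drive the stake to zero. The paper's only cosmetic difference is that it case-splits the dishonest behaviour into over-scoring, under-scoring, and random-scoring rather than giving your unified order-statistic argument, and it simply takes $k\to\infty$ without your additional remark about ejection below a minimum stake; your explicit identification of the honest-score dispersion versus $Th$ as the load-bearing assumption is a point the paper leaves implicit.
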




\begin{proof}
Let $Score_i$ be the score submitted by validator $v_i$ in epoch $e$, and let $Score_\text{med}$ be the committee's median score. The smart contract enforces slashing if the deviation exceeds a threshold $Th$: $|Score_i - Score_\text{med}| > Th \Rightarrow \bm{Slash}[i] = \gamma \cdot ST_{v_i}, \gamma \in (0,1]$. We consider three typical dishonest behaviors:
\begin{itemize}
    \item \textbf{Over-scoring.} Validator gives artificially high scores to low-quality outputs. As the majority of validators are assumed honest ($<1/3$ are malicious), the honest scores anchor $Score_\text{med}$ closer to the ground truth. Hence, inflated scores from $v_i$ will exceed the threshold and trigger slashing.

    \item \textbf{Under-scoring.} Validator gives abnormally low scores to degrade honest workers' scores. Again, the deviation from $Score_\text{med}$ (which is close to honest average) leads to $|Score_i - Score_\text{med}| \gg Th \Rightarrow \text{Slashed}$
    
    \item \textbf{Random-scoring.} Validator submits inconsistent or uncorrelated scores to disrupt consensus. Statistically, such behavior will periodically deviate beyond $Th$ and accumulate slashing over epochs.
\end{itemize}
In each case, the validator’s staked token is slashed by a rate $\gamma$ per dishonest epoch. After $k$ dishonest epochs, the remaining stake is: $ ST_{v_i}^{(k)} = ST_{v_i}^{(0)} \cdot (1 - \gamma)^k$. Taking the limit as $k \to \infty$: $\lim_{k \to \infty} ST_{v_i}^{(k)} = 0$. Therefore, any validator that persistently submits dishonest scores will eventually lose all staked tokens.
\end{proof}

\section{Implementation and Evaluation}

\subsection{Implementation}\label{AA}

\noindent We implement the proposed \textsc{PolyLink} system, integrating all previously described components. The system backend and frontend are deployed on a cloud server instance with 4-core CPU and 8G memory. We deploy the smart contracts and issue the ERC-20 token on the Sepolia testnet\footnote{\url{https://sepolia.etherscan.io/address/0x9711b259e6281a1eA9465362Cb0BDd5D9Bf35AaD}}.

\subsection{Real-world Evaluation}
\subsubsection{Geo-distributed Deployment} 

We perform a large-scale geo-distributed deployment of 20 devices from 10 different workers across multiple regions, including Hong Kong SAR, Guangzhou and Shenzhen in China and Kanazawa in Japan. 

These workers operate in a decentralized edge network configuration, where each worker runs independently and participates in trustless inference tasks. The detailed hardware specifications of deployed worker nodes are shown in Table~\ref{tab:edge_nodes}.











\begin{table}[htbp]
\centering
\caption{Hardware Specifications of Geo-distributed Edge Workers}
\label{tab:edge_nodes}
\begin{adjustbox}{width=\linewidth}
\begin{tabular}{llp{8cm}c}
\toprule
\textbf{Worker} & \textbf{Location} & \textbf{Device Types} & \textbf{Quantity} \\
\midrule
1 & Guangzhou, Panyu & NVIDIA RTX 2080 Ti & 2 \\

2 & Kanazawa, Kakumamachi & NVIDIA RTX A4500 & 1 \\

\multirow{2}{*}{3} 
 & Hong Kong, Hung Hom & NVIDIA RTX 4060 Ti$\times$1, RTX 3080 Ti$\times$1, Jetson Orin NX 16GB$\times$3, Jetson AGX Orin 32GB$\times$3 & 8 \\
 & Shenzhen, Luohu & Apple MacBook Pro (M3 Pro) & 1 \\

4 & Hong Kong, Pokfulam & NVIDIA Tesla P100 & 2 \\

5 & Hong Kong, Sai Kung & NVIDIA RTX 4090 & 8 \\

6 & Hong Kong, Hung Hom & NVIDIA RTX 3090 & 3 \\

7 & Hong Kong, Sha Tin & NVIDIA RTX 3090 & 4 \\

8 & Hong Kong, Sha Tin & NVIDIA RTX 3060 & 1 \\

9 & Hong Kong, Hung Hom & NVIDIA RTX 2060$\times$1, RTX 3090$\times$5, RTX 3090 Ti$\times$1 & 7 \\

10 & Hong Kong, Hung Hom & NVIDIA Quadro GV100$\times$2, Tesla A100$\times$1, RTX 4090$\times$4 & 7 \\
\bottomrule
\end{tabular}
\end{adjustbox}
\end{table}

\subsubsection{Evaluation Metrics} 
We conduct evaluations from three primary aspects: \emph{Geo-distributed Decentralized Inference Performance}, \emph{TIQE Protocol Performance}, and \emph{Reward Distribution}.
        
\noindent \textbf{Geo-distributed Decentralized Inference Performance.} We consider several metrics: \textit{Average Latency} and \textit{Time to First Token (TTFT)} to measure user experience, \textit{Output Token Throughput} and \textit{Request Throughput} to assess system performance, and \textit{Failure Rate} to reflect system stability.

\noindent \textbf{TIQE Protocol Performance.} We evaluate the TIQE protocol using two indicators:

\begin{itemize}
    \item \textbf{Performance Overhead:} This refers to the computational and financial cost introduced by the quality assessment process. Specifically, we measure the latency of the Cross-Encoder and LLM-as-a-Judge components, as well as the cost associated with invoking LLM-as-a-Judge.
    
    \item \textbf{Degradation Detection Accuracy:} This evaluates the effectiveness of TIQE in identifying degraded outputs. We report the True Positive (TP) rate and False Positive (FP) rate for both the Cross-Encoder and LLM-as-a-Judge.
\end{itemize}

\noindent \textbf{Task Reward Distribution.} We consider reward distribution under different conditions to evaluate:
\begin{itemize}
    \item \textbf{Incentive Effectiveness:} whether high-quality performers receive appropriate rewards.
    \item \textbf{Penalty Rationality:} whether low-quality or dishonest behaviors are properly penalized.
    \item \textbf{Fairness of Distribution:} whether validators receive fair rewards proportional to their stake.
\end{itemize}

\subsection{Experimental Setup}

\subsubsection{Geo-distributed Decentralized Inference Performance} 
We deploy LLMs with different parameters on the workers: DeepSeek-R1-1.5B, DeepSeek-R1-7B, and DeepSeek-R1-14B. 
The models are executed using both single-device and cross-device settings.
A total of 1{,}000 queries are sampled from the H3 dataset~\cite{guo-etal-2023-hc3} and sent from a client located in Hong Kong. 

\subsubsection{TIQE Protocol Performance} We employ the Cross-Encoder model \emph{TinyLM-L6-v2}%
\footnote{\url{https://huggingface.co/cross-encoder/ms-marco-MiniLM-L6-v2}} and LLM-as-a-Judge using the \emph{DeepSeek} API%
\footnote{\url{https://api-docs.deepseek.com/}} (0.07 \$/1M input token, 1.10 \$/1M output token). For degradation detection, we sample 50 true and 50 false response examples from the ShareGPT-Chinese-English-90k dataset~\cite{ShareGPT-Chinese-English-90k}, which consists of real-world dialogues between users and ChatGPT-4o.

\subsubsection{Task Reward Distribution} To evaluate the proposed reward mechanism, we analyze reward distribution under varying conditions, with parameters summarized in Table~\ref{tab:reward_params}.

\begin{table}[htbp]
\centering
\caption{Evaluation Parameters Setting for Reward Distribution}
\label{tab:reward_params}
\begin{adjustbox}{width=\linewidth}
\begin{tabular}{lll}
\toprule
 \textbf{Parameter} & \textbf{Description} & \textbf{Value} \\
\midrule
 $\beta$ & Portion of reward reserved for validators & 0.3 \\
 $\theta$ & Scaling factor for task reward & 1.0 \\
 $b$ & Tasks per batch & 32 \\
$R_i$ & Cost per task & $[0.1, 0.5]$ \\
 $\alpha$ & Score representing result quality & \{0.2, 0.5, 0.8, 1.0\} \\
 $\bm{ST}$ & Validator stake vectors & \begin{tabular}[c]{@{}l@{}}Case 1: [100, 100, 100] \\ Case 2: [100, 300, 600]\end{tabular} \\
\bottomrule
\end{tabular}
\end{adjustbox}
\end{table}


\subsection{Results and Analysis}

\subsubsection{Geo-distributed Decentralized Inference Performance} 

Table~\ref{tab:performance} summarizes the inference performance. Model size significantly affects latency and throughput: models with few parameters (e.g., 1.5B) yield lower latency and higher throughput, while models with large parameters (e.g., 7B, 14B) incur substantial overhead. Geographical distance also impacts performance, workers in Hong Kong consistently achieve better results. For instance, the RTX 4060 Ti running the 1.5B model attains an average latency of \SI{10.82}{s}, TTFT of \SI{1.02}{s}, and throughput of \SI{120.40}{tok/s}; in contrast, the same model in Shenzhen shows worse performance. Cross-device inference enables execution of models with large parameters (e.g., 14B), but introduces additional overhead, with the highest latency observed at \SI{160.05}{s}. In all configurations, failure rates remain below 5\%, and acceptable latency for personal use. These results confirm that \textsc{PolyLink} supports reliable, responsive inference across heterogeneous, geo-distributed environments.

\begin{table*}[htbp]
    \centering
    \caption{Geo-distributed Decentralized Inference Performance in \textsc{PolyLink}.}
    \label{tab:performance}
    \resizebox{\textwidth}{!}{%
    \begin{threeparttable}
        \begin{tabular}{lccccccc}
            \toprule
                Device Location & Device Type & Model & Latency (s) & TTFT (s) & OTPS (tok/s) & RTPS (req/s) & FR\\
            \midrule
                \multirow{3}{*}{Hong Kong} & RTX4060 Ti & Deepseek-R1-1.5B & 10.82 & 1.02 & 120.40 & 0.092 & 0\%\\
                
                  & RTX3080 Ti & Deepseek-R1-7B & 12.13 & 1.20 & 87.30 & 0.082 & 0\%\\

                  & RTX4090 & Deepseek-R1-14B & 17.28 & 1.37 & 55.73 & 0.058 & 0\%\\
                  
                \multirow{2}{*}{Shenzhen} & Apple M3Pro & Deepseek-R1-1.5B & 23.30 & 5.65 & 49.46 & 0.041 & 2\% \\

                  & Apple M3Pro & Deepseek-R1-7B & 52.84 & 17.59 & 18.02 & 0.018 & 5\% \\
            \midrule
                Hong Kong  & RTX3080 Ti \& RTX4060 Ti & Deepseek-R1-14B & 160.05 & 8.84 & 7.12 & 0.006 & 2\% \\
            \bottomrule
            \end{tabular}  
        \begin{tablenotes}[flushleft]
            \footnotesize
            \item $\dagger$ Latency: The total time from sending a request to receiving the complete response. TTFT: The time from request initiation to the first token being received. OTPS: The number of output tokens generated per second by the model. RTPS: The number of complete requests processed per second. FR: The percentage of inference requests that did not complete successfully.
        
        \end{tablenotes}
    \end{threeparttable}
   }
\end{table*}

\begin{figure}[htbp]
    \centering
    \includegraphics[width=0.8\linewidth]{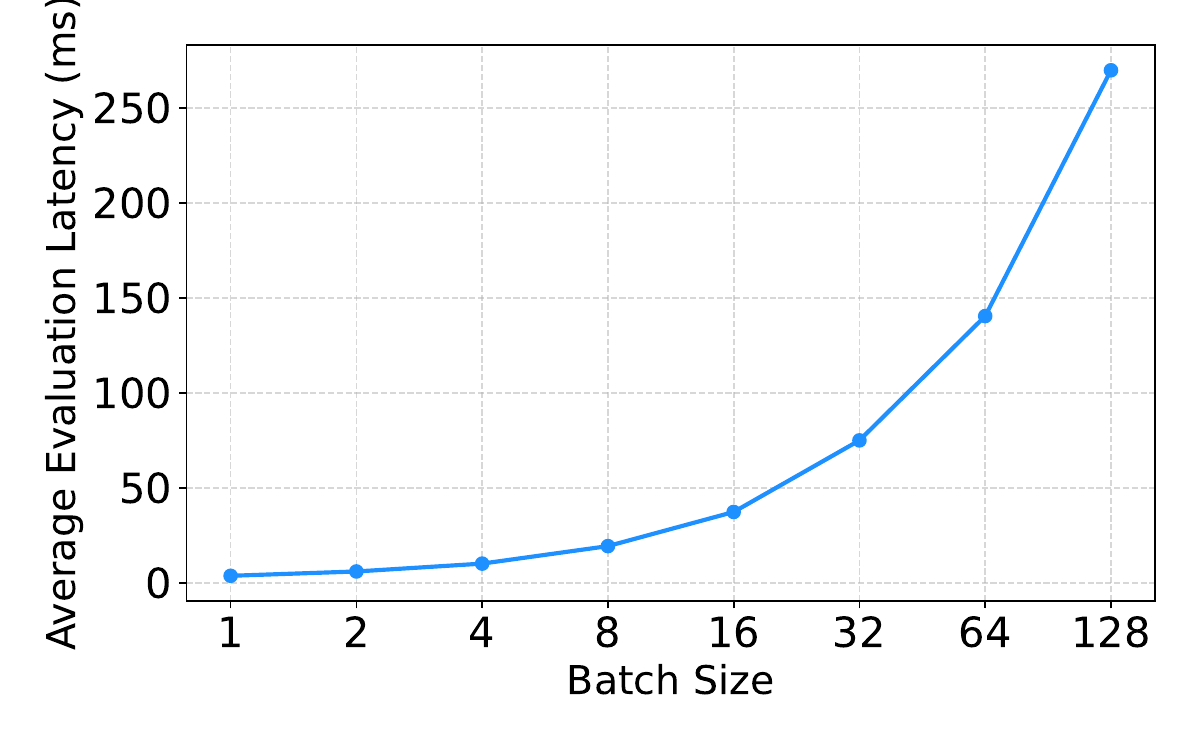}
    \caption{Average per-batch evaluation latency of the Cross-Encoder under different batch sizes.}
    \label{fig:cross-encoder}
\end{figure}

\begin{figure}[htbp]
  \centering
  \begin{subfigure}[t]{0.49\linewidth}
    \centering
    \includegraphics[width=\linewidth]{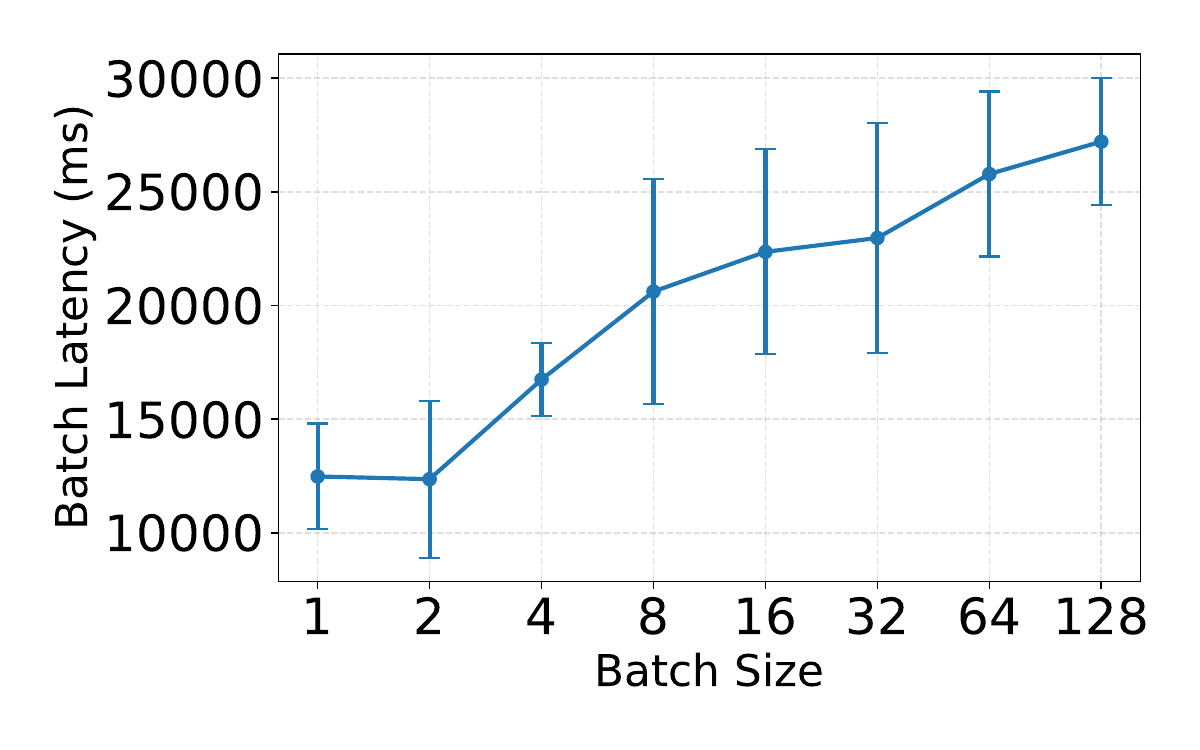}
    \caption{Latency}
    \label{fig:llm1}
  \end{subfigure}
  \hfill
  \begin{subfigure}[t]{0.49\linewidth}
    \centering
    \includegraphics[width=\linewidth]{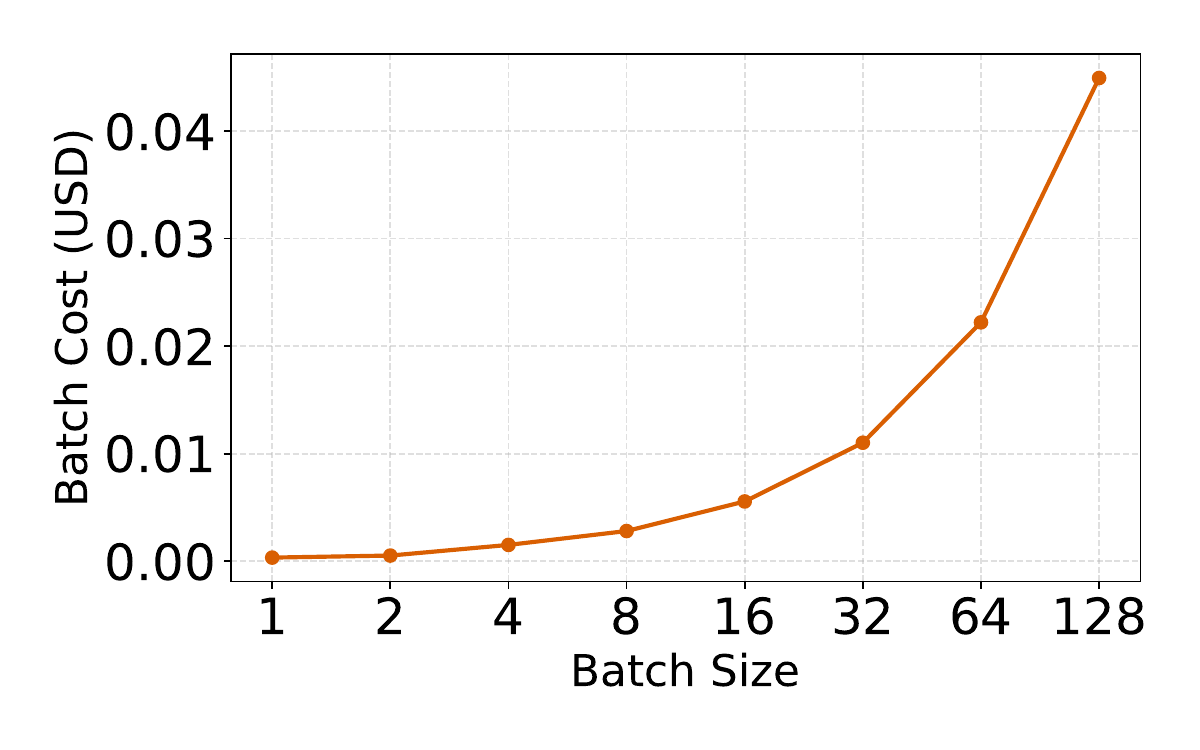}
    \caption{Cost}
    \label{fig:llm2}
  \end{subfigure}
  \caption{Per-Batch Latency and Cost of LLM-as-a-Judge under different batch sizes.}
  \label{fig:llm}
\end{figure}

\begin{figure}[htbp]
    \centering
    \includegraphics[width=1\linewidth]{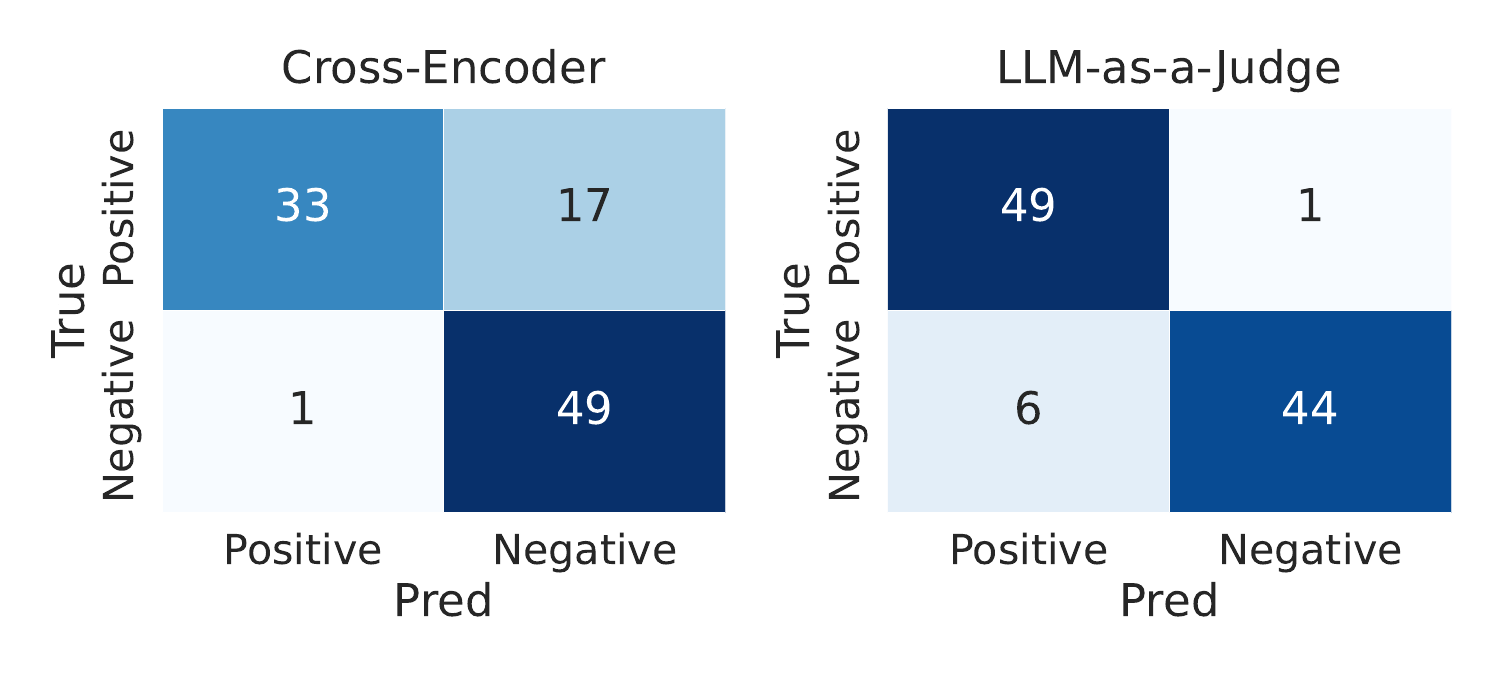}
    \caption{Confusion matrices of cross encoder and LLM-as-a-Judge for detecting Model Degradation Attack.}
    \label{fig:cm}
\end{figure}

\subsubsection{TIQE Protocol Performance}

 As shown in Fig.~\ref{fig:cross-encoder}, the average per-batch evaluation latency of Cross-Encoder increases with batch size, from around 10\,ms at batch size 1 to over 250\,ms at batch size 128. This indicates that the Cross-Encoder is lightweight and scales efficiently with moderate batch sizes. Since the model is lightweight, we omit its cost in our analysis.
 
 Fig.~\ref{fig:llm} shows the latency and cost of invoking LLM-as-a-Judge. Despite concurrent requests within each batch, latency increases with batch size due to queuing, and cost rises accordingly. This reveals a trade-off between batching efficiency and response time.
 
 Fig.~\ref{fig:cm} illustrates the confusion matrices of the Cross-Encoder and LLM-as-a-Judge on the degradation detection task. The Cross-Encoder achieves a TP rate of 66\% (33/50) and a FP rate of 2\% (1/50), indicating moderate detection capability with minimal false alarms. In contrast, the LLM-as-a-Judge significantly improves the TP rate to 98\% (49/50), with a slightly higher FP rate of 12\% (6/50).

 These results show that the Cross-Encoder offers low-latency, cost-effective evaluation with acceptable accuracy, while LLM-as-a-Judge provides higher detection accuracy with increased latency and cost.

\subsubsection{Task Reward Distribution}

 Fig. \ref{fig:rd} shows the reward distribution results. As the model quality score increases, the worker’s reward consistently rises, reflecting higher returns for higher-quality inference. In the unequal stake setting (Fig. \ref{fig:rd2}), the validator with a higher stake (Validator 3) receives the largest share among validators, demonstrating that rewards are allocated proportionally to stake. These results validate that the proposed mechanism fairly and effectively allocates rewards based on both contribution quality and economic stake.

\begin{figure}[htbp]
  \centering
  \begin{subfigure}[t]{0.49\linewidth}
    \centering
    \includegraphics[width=\linewidth]{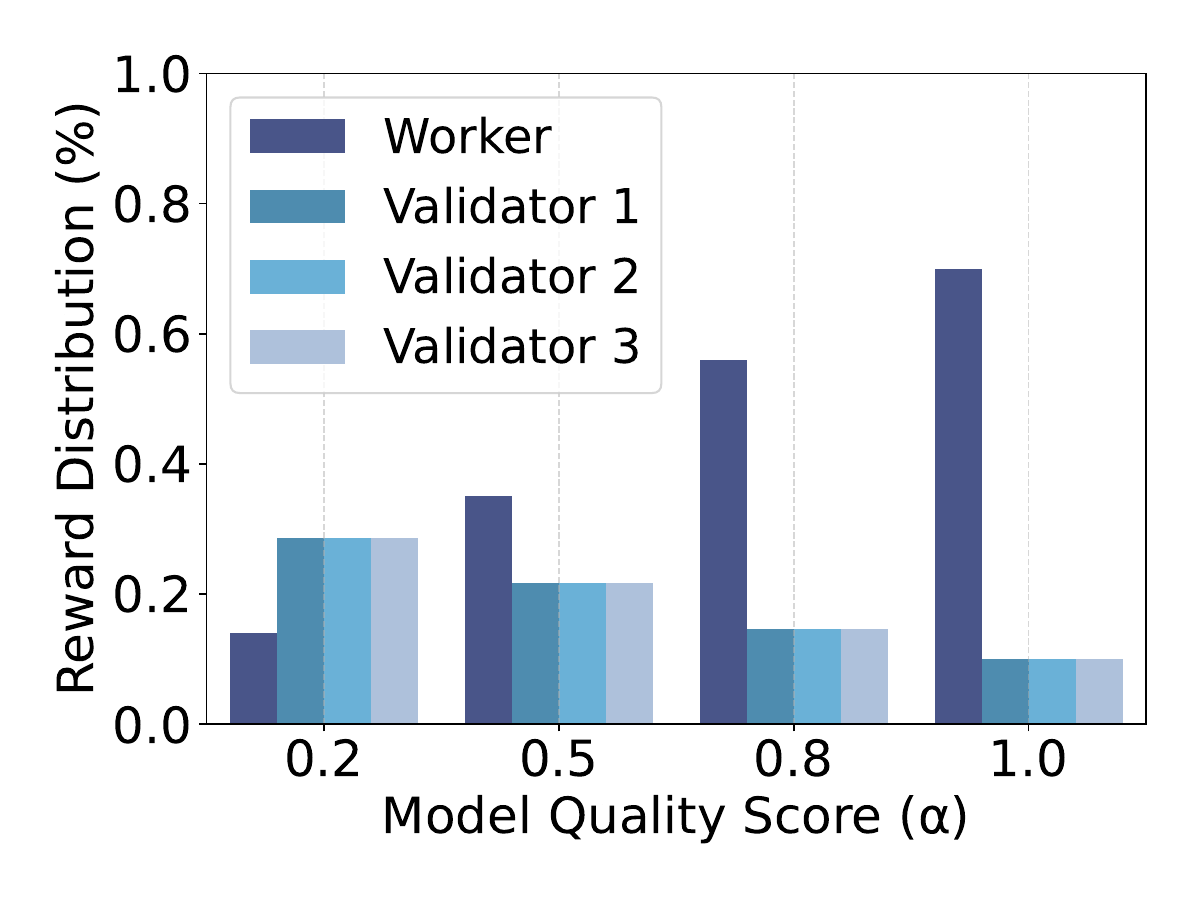}
    \caption{Equal Stake Setting.}
    \label{fig:rd1}
  \end{subfigure}
  \hfill
  \begin{subfigure}[t]{0.49\linewidth}
    \centering
    \includegraphics[width=\linewidth]{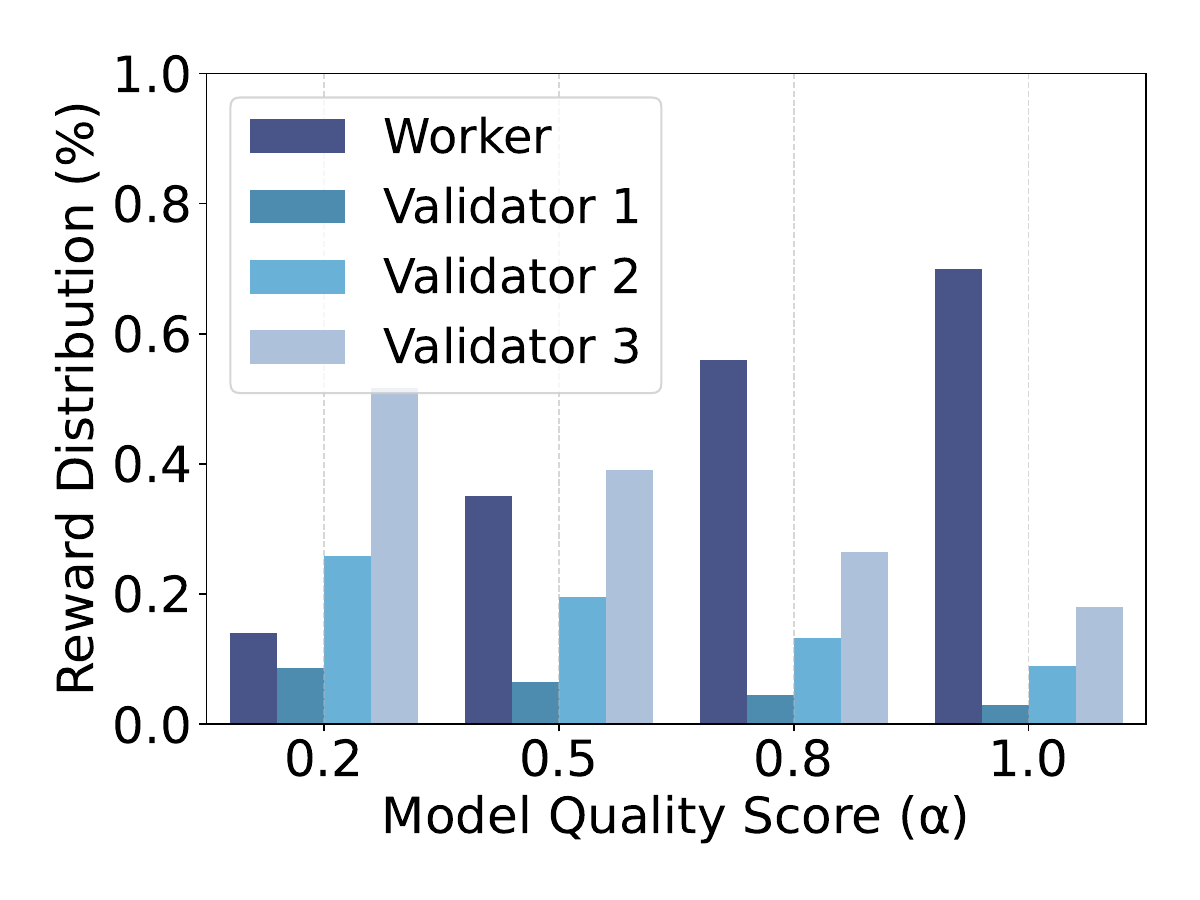}
    \caption{Unequal Stake Setting.}
    \label{fig:rd2}
  \end{subfigure}
  \caption{Reward distribution in a batch under different stake settings.}
  \label{fig:rd}
\end{figure}

\section{Conclusion and Future Works}
\noindent We proposed \textsc{PolyLink}, a blockchain-based decentralized AI platform that enables decentralized LLM inference across edge networks. To ensure inference integrity at low cost, we introduced the Trustless Inference Quality Evaluation (TIQE) protocol. In addition, our incentive model promotes fair and effective reward allocation among network participants. Real-world deployment and evaluation demonstrate that \textsc{PolyLink} is a practical and scalable solution for decentralized AI in Web3 and DePIN ecosystems. However, the platform still has some limitations. First, the security assumption that adversaries control less than one-third of the validators may not hold in practical environments. Second, the cross-device inference method suffers from network communication latency, so the number of devices is limited. In the future, we will explore the solution to the limitations and extend the network scale of PolyLink by incorporating cross-chain support \cite{cao2025map} and model training \cite{10.1145/3701716.3715484} \cite{10.1145/3696410.3714666}. Moreover, we plan to deploy smart city applications including digital twins and metaverse over PolyLink \cite{10294558} \cite{11020582}. 

\section*{Acknowledgment}

\noindent This work is supported by CM-PolyU Joint Research Project (No. R24114H7), HK RGC Theme-based Research Scheme (No. T43-513/23-N), and Research Institute for Artificial Intelligence of Things, The Hong Kong Polytechnic University.

\bibliographystyle{IEEEtran}
\bibliography{Ref}

\end{document}